\documentclass[lettersize,journal]{IEEEtran}
\usepackage{amsmath,amsfonts}
\usepackage{algorithmic}
\usepackage{array}
\usepackage[caption=false,font=normalsize,labelfont=sf,textfont=sf]{subfig}
\usepackage{textcomp}
\usepackage{stfloats}
\usepackage{url}
\usepackage{verbatim}
\usepackage{graphicx}
\usepackage{enumitem}
\usepackage{amsthm}
\newtheorem{definition}{Definition}
\newtheorem{theorem}{Theorem}
\newtheorem{lemma}{Lemma}
\newtheorem{remark}{Remark}
\newtheorem{corollary}{Corollary}

\def\BibTeX{{\rm B\kern-.05em{\sc i\kern-.025em b}\kern-.08em
    T\kern-.1667em\lower.7ex\hbox{E}\kern-.125emX}}
\usepackage{balance}

\begin{document}
\title{Cryptanalysis of a Class of Ideal Secret Sharing Schemes Based on the CRT for Polynomial Rings}
\author{Jian~Ding,~Zhiyuan Guo,~Hongju Li, and~Yuying Man
\thanks{This work was supported in part by the National Natural Science Foundation of China (NSFC) Young Scientists Fund (Category C) under Grant No. 126011024, in part by Natural Science Research Project of Anhui Higher Education Institutions under Grants No. 2024AH051324 and No. 2025AHGXZK30056, and in part by Projects of Chaohu University under Grants No. KYQD-202220, No. 2024cxtd147 and No. hxkt20250327.}
\thanks{J. Ding, H. Li and Y. Man are with School of Mathematics and Big Data, Chaohu University, Hefei, China
(e-mail: dingjian\_happy@163.com, 058072@chu.edu.cn, yuying.man@aliyun.com).}
\thanks{Z. Guo is with School of Mathematics and Statistics, Hefei University, Hefei, China
(e-mail: gzy\_26105@163.com.)}}


\maketitle

\begin{abstract}
  Based on the CRT for polynomial rings, Yang, Zhu, Fu and Xia (ISIT 2026) proposed a compartmented secret sharing scheme for the compartmented access structure with lower bounds, claiming it can be ideal. We exhibit three families of unauthorized subsets that reconstruct the secret. The scheme is therefore insecure, except for a degenerate choice of parameters in which the only authorized subset is the whole participant set. The flaw in the security analysis is that it does not take all published polynomials into account. Similar flaws appear in the same first author's hierarchical schemes (ISIT 2024, 2026).
\end{abstract}

\begin{IEEEkeywords}
Ideal secret sharing, compartmented secret sharing, hierarchical secret sharing, Chinese Remainder Theorem, cryptanalysis.
\end{IEEEkeywords}

\section{Introduction}
In a secret sharing scheme (SSS) \cite{Shamir1979, Blakley1979}, a dealer splits a secret into shares and distributes them among a set $\mathcal{P}$ of $n$ participants, in such a way that the authorized subsets of $\mathcal{P}$ can reconstruct the secret while the unauthorized ones cannot. The collection of all authorized subsets forms the \emph{access structure}. An SSS is \emph{perfect} if every unauthorized subset obtains no information about the secret, and \emph{ideal} if it is perfect and has information rate one.

The first proposed secret sharing schemes \cite{Shamir1979,Blakley1979} have the \emph{$(t,n)$-threshold} access structure:
\[\Gamma=\{\mathcal{A}\subseteq \mathcal{P}: |\mathcal{A}|\geq t\}.\]
In such a scheme, all the participants play exactly the same role, and none of them is more privileged than another. Many applications, however, involve participants with different degree of trust. Compartmented access structures, introduced by Simmons \cite{Simmons1990}, are suitable for this scenario. In a \emph{compartmented} secret sharing scheme (CSSS), the participant set $\mathcal{P}$ is partitioned into compartments $\gamma_1,\gamma_2,\ldots,\gamma_q$, each equipped with a threshold $t_i$, and a global threshold $t$ is imposed on $\mathcal{P}$; the resulting \emph{compartmented access structure with lower bounds} (LCAS) is
       \[\Gamma_L=\{\mathcal{A} \subseteq \mathcal{P}: \forall 1\leq i\leq q,\ |\mathcal{A}\cap \gamma_i|\geq t_i~\mathrm{and}~|\mathcal{A}|\geq t\}.\]

Many ideal CSSSs for LCASs have been constructed from linear algebra \cite{Brickell1989}, bivariate interpolation \cite{TassaDyn2009}, polymatroid theory \cite{Farras2012,Chenqi2019}, and coding theory \cite{Chen2022}. However, some of these constructions have limitations. The construction of \cite{Brickell1989} requires checking the non-singularity of an exponential number of matrices. That of \cite{TassaDyn2009} is probabilistic and only statistically perfect. In contrast, the CRT-based construction over the integers \cite{TipleaDragan2018} is non-ideal. Recently, Yang, Zhu, Fu and Xia \cite{YangCSS2026} proposed an explicit CSSS for LCAS based on the CRT for polynomial rings. They claimed that it is perfect and can be made ideal by an appropriate choice of the moduli.

\emph{Our contributions.} We point out that the scheme of \cite{YangCSS2026} is insecure, and analyse the root cause of its insecurity. More precisely, we give the following contributions.
\begin{enumerate}
  \item We exhibit three families of unauthorized subsets that reconstruct the secret (Theorem \ref{Th: attack}). The scheme is therefore insecure, for every choice of the parameters except the degenerate one in which the only authorized subset is $\mathcal{P}$ itself (Corollary \ref{Cor: witnesses}).

  \item We find that the flaw in the security analysis of \cite{YangCSS2026} is its failure to take all the published polynomials into account. Similar flaws also appear in the hierarchical secret sharing schemes by the same first author \cite{Yangjing2024,YangjingHSS2026}, whose security analysis omits the published polynomials in exactly the same way.
\end{enumerate}

The rest of the paper is organised as follows. Section \ref{Sec: prelim} introduces the basic notions of secret sharing and the CRT for polynomial rings. Section \ref{Sec: review} reviews the scheme of \cite{YangCSS2026}. Section \ref{Sec: attack} presents the cryptanalysis, together with the root cause of the insecurity, and Section \ref{Sec: conclusion} concludes the paper.

\section{Preliminaries}\label{Sec: prelim}
\begin{definition}[Secret sharing scheme]
Let $\mathcal{P}=\{P_1,P_2,\ldots,\\P_n\}$ denote the set of all participants. The secret space is denoted by $\mathcal{S}$, while $\mathcal{S}_\theta$ stands for the share space of the $\theta$-th participant. A secret sharing scheme of $n$ participants generally comprises two procedures: share distribution and secret reconstruction.
	\begin{enumerate}
	  \item Share Distribution Phase: For any secret drawn from $\mathcal{S}$ and a certain random input selected from $\mathcal{R}$, the dealer invokes the share generation algorithm
		\[\mathsf{SHAR}\colon \mathcal{S}\times \mathcal{R}\to \mathcal{S}_1\times \mathcal{S}_2\times\dotsb\times \mathcal{S}_n\]
		to allocate shares among the participants in $\mathcal{P}$.
	  \item Secret Reconstruction Phase: Any authorized subset $\mathcal{A}\subseteq \mathcal{P}$ is capable of reconstructing the secret through the secret reconstruction algorithm
		\[\mathsf{RECO}\colon\prod_{P_\theta\in \mathcal{A}} \mathcal{S}_\theta\to \mathcal{S},\]
		whereas any unauthorized subset is not.
	\end{enumerate}

\end{definition}

All authorized subsets form a set called the access structure. In the sequel, each participant $P_\theta$ is frequently identified with its index $\theta$. For a positive integer $n$ we write $[n]=\{1,2,\ldots,n\}$, and so that $\mathcal{P}=[n]$.

\begin{definition}[Compartmented Secret Sharing Scheme, \cite{Chenqi2019}]\label{def: CSS}
Let $\mathcal{P}=\{P_1,P_2,\ldots,P_n\}$ be the set of all participants, partitioned by the dealer into $q$ disjoint compartments $\gamma_1, \gamma_2, \ldots, \gamma_q$, i.e.,
                       \[\mathcal{P}=\bigcup_{i=1}^{q} \gamma_i,~\mathrm{and}~\gamma_i\cap\gamma_j=\emptyset~\mathrm{for~all}~1\leq i<j\leq q.\]
Denote by $|\cdot|$ the cardinality of a set. Let $t_1, t_2, \ldots, t_q,t$ be positive integers such that
                      \[t_i\leq |\gamma_i|=n_i (1\le i\le q),~\mathrm{and}~\sum_{i=1}^q t_i\leq t\leq n.\]
The compartmented access structure with lower bounds (LCAS) is
                 \[\Gamma_L=\{\mathcal{A} \subseteq \mathcal{P}: \forall 1\leq i\leq q,\ |\mathcal{A}\cap \gamma_i|\geq t_i~\mathrm{and}~|\mathcal{A}|\geq t\}.\]
A Compartmented Secret Sharing Scheme (CSSS) for the LCAS with perfect security is a secret sharing scheme satisfying the following correctness and privacy requirements.
\begin{enumerate}
\item Correctness. The secret can be reconstructed from any authorized subset $\mathcal{A}\in\Gamma_L$.
\item Privacy. No information about the secret can be obtained from any subset $\mathcal{B}\notin \Gamma_L$.
\end{enumerate}
\end{definition}

Throughout this paper we assume $q\geq 2$, the case $q=1$ reducing to a threshold access structure.

\begin{lemma}[CRT for polynomial rings, \cite{YangCSS2026}]\label{le: CRT}
Let $p$ be a prime, $\mathbb{F}_p$ the finite field with $p$ elements, and $\mathbb{F}_p[x]$ the univariate polynomial ring over $\mathbb{F}_p$. Let $m_1(x),m_2(x),\ldots,m_k(x)\in \mathbb{F}_p[x]$ be pairwise coprime, let $M(x)=\prod_{i=1}^{k}m_i(x)$, $M_i(x)=M(x)/m_i(x)$, and $\lambda_i(x)\equiv M_i^{-1}(x)\pmod{m_i(x)}$. Then, for any $f_1(x),\ldots,f_k(x)\in \mathbb{F}_p[x]$, the system of congruences $f(x)\equiv f_i(x)\pmod{m_i(x)}, i\in[k]$, has the solution
               \[f(x)\equiv \sum_{i=1}^{k}\lambda_i(x)M_i(x)f_i(x)\pmod{M(x)}.\]
If $\deg(f(x))<\deg(M(x))$, then the solution is unique, and we denote it by
               \[f(x)=\sum_{i=1}^{k}\lambda_i(x)M_i(x)f_i(x)\bmod{M(x)}.\]
\end{lemma}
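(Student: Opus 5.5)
The plan is the standard verification of the CRT in the principal ideal domain $\mathbb{F}_p[x]$, followed by a degree argument for uniqueness.

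\textbf{Well-definedness of $\lambda_i(x)$.} Each $m_j(x)$ with $j\neq i$ is coprime to $m_i(x)$. Since $\mathbb{F}_p[x]$ is a Euclidean domain, and hence a UFD, a prime factor of $m_i(x)$ dividing the product $M_i(x)=\prod_{j\ne i}m_j(x)$ would have to divide some $m_j(x)$, which is impossible. Hence $\gcd(M_i(x),m_i(x))=1$. Bézout's identity, obtained from the extended Euclidean algorithm, gives $u(x),v(x)$ with $u(x)M_i(x)+v(x)m_i(x)=1$. So $\lambda_i(x):=u(x)$ satisfies $\lambda_i(x)M_i(x)\equiv 1\pmod{m_i(x)}$, and the inverse exists.

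\textbf{The formula solves the system.} Fix $j\in[k]$ and reduce $F(x)=\sum_{i=1}^k\lambda_i(x)M_i(x)f_i(x)$ modulo $m_j(x)$. For $i\neq j$, $m_j(x)$ divides $M_i(x)$, so those terms vanish. The $j$-th term is congruent to $1\cdot f_j(x)$. Hence $F(x)\equiv f_j(x)\pmod{m_j(x)}$ for every $j$.

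Next, take any $f(x)\equiv F(x)\pmod{M(x)}$. Since $m_j(x)\mid M(x)$, we get $f(x)\equiv F(x)\equiv f_j(x)\pmod{m_j(x)}$, so $f$ also solves the system. Conversely, let $f,g$ be two solutions. Then $f-g$ is divisible by every $m_j(x)$. Because the $m_j$ are pairwise coprime, an induction on $k$ shows their product $M(x)$ divides $f-g$: if $a\mid h$, $b\mid h$ and $\gcd(a,b)=1$, then $ab\mid h$, which follows from Bézout as $h=h(ua+vb)$. Therefore the solution set is exactly the residue class of $F(x)$ modulo $M(x)$, which proves the first claim.

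\textbf{Uniqueness under the degree bound.} If $f,g$ are solutions with $\deg f,\deg g<\deg M$, then $M\mid f-g$ and $\deg(f-g)<\deg M$. This forces $f-g=0$. Existence of a solution of degree $<\deg M$ follows by dividing $F(x)$ by $M(x)$ with remainder, which is what the notation $F(x)\bmod M(x)$ denotes.

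The only step needing any care is the passage from pairwise coprimality to coprimality of $m_i$ with $M_i$, and from divisibility by each $m_j$ to divisibility by $M$. Both are handled by Bézout's identity in $\mathbb{F}_p[x]$, so I expect no real obstacle.
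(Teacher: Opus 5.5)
The paper gives no proof of this lemma. It is quoted from the scheme under attack as the standard Chinese Remainder Theorem, so there is no argument of the authors' to compare yours with.

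Your proof is the standard one and is correct:
\begin{enumerate}
\item Bézout's identity gives $\lambda_i(x)$.
\item Reducing $F(x)=\sum_i\lambda_i(x)M_i(x)f_i(x)$ modulo each $m_j(x)$ gives existence.
\item Divisibility of $f-g$ by $M(x)$ shows the solution set is exactly the residue class of $F(x)$.
\item The degree comparison gives uniqueness among polynomials of degree below $\deg(M(x))$.
\end{enumerate}

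The only step you leave implicit is in the induction. There you need each partial product $m_1(x)\cdots m_j(x)$ to be coprime to $m_{j+1}(x)$. Your prime-factor argument from the first paragraph supplies this verbatim.

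The degree-bounded uniqueness you prove is exactly the form in which the paper later uses the lemma. For example, it recovers $s'_{q+1}(x)$ from its residues modulo $m_\theta(x)$, $\theta\in\mathcal{B}$, once $\deg(s'_{q+1}(x))<D_{\mathcal{B}}$.
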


\section{Review of the scheme of Yang et al.}\label{Sec: review}

 Let $\mathcal{P}=\{P_1,P_2,\ldots,P_n\}$ be the set of all participants and let it be partitioned into $q\geq 2$ disjoint compartments $\gamma_1,\gamma_2,\ldots,\gamma_q$, where $|\gamma_i|=n_i$ for $i\in [q]$. Set $N_j=\sum_{i=1}^{j}n_i$ and $N_0=0$, so that $N_q=n$. The compartment thresholds satisfy $0<t_i\leq n_i$ for all $i\in [q]$, and the global threshold satisfies $\sum_{i=1}^{q}t_i\leq t\leq n$. The access structure is
             \[\Gamma_L=\{\mathcal{A} \subseteq \mathcal{P}: \forall 1\leq i\leq q,\ |\mathcal{A}\cap \gamma_i|\geq t_i~\mathrm{and}~|\mathcal{A}|\geq t\}.\]
For $u\in [q]$ and $v\in [n_u]$, the $v$-th participant of $\gamma_u$ is denoted by $P_{u,v}$, which coincides with $P_{\theta}$ for $\theta=N_{u-1}+v$. The scheme of \cite{YangCSS2026} consists of the share distribution phase and the secret reconstruction phase.

\textbf{1) Share Distribution Phase.}
\begin{enumerate}[label=Step \arabic*., leftmargin=*]
   \item The dealer $D$ chooses a prime $p$ and an integer $d_0\geq 1$, and sets $m_0(x)=x^{d_0}\in \mathbb{F}_p[x]$.
   \item $D$ chooses pairwise coprime polynomials $m_{\theta}(x)\in \mathbb{F}_p[x]$, $\theta\in [n]$, and assigns $m_{\theta}(x)=m_{u,v}(x)$ to $P_{\theta}=P_{u,v}$ with $\theta=N_{u-1}+v$. Denote by $d_{\theta}=\deg(m_{\theta}(x))$ and $d_{u,v}=\deg(m_{u,v}(x))$, then these polynomials satisfy
   \begin{enumerate}[label=(\roman*)]
      \item $m_0(x)$ and $m_{\theta}(x)$ are coprime for all $\theta\in [n]$;
      \item $d_0\leq d_1\leq d_2\leq \cdots \leq d_n$, that is, $d_0\leq d_{1,1}\leq \cdots \leq d_{1,n_1}\leq d_{2,1}\leq \cdots \leq d_{q,n_q}$;
      \item $d_0+\sum\limits_{\theta^{(i)}=n-t_i+2}^{n}d_{\theta^{(i)}}\leq \sum\limits_{\theta^{(i)}=1}^{t_i}d_{\theta^{(i)}}$ for all $i\in [q]$;
      \item $d_0+\sum\limits_{\theta=n-t+2}^{n}d_{\theta}\leq \sum\limits_{\theta=1}^{t}d_{\theta}$.
   \end{enumerate}
   \item $D$ publishes $m_0(x)$ and $m_{\theta}(x)$ for $\theta\in [n]$.
   \item $D$ chooses the secret $s(x)$ uniformly at random from $\mathcal{S}=\{g(x)\in \mathbb{F}_p[x]: \deg(g(x))<d_0\}$.
   \item $D$ chooses $q$ piece polynomials $s_1(x),\ldots,s_q(x)$ independently and uniformly from $\mathcal{S}$, and sets $s_{q+1}(x)=(s(x)-s_1(x)-\cdots-s_q(x))\bmod{m_0(x)}$.
   \item For each $u\in [q]$, $D$ chooses $\alpha'_u(x)$ uniformly from $\mathcal{G}'_u=\{g(x)\in \mathbb{F}_p[x]: \deg(g(x))<(\sum_{\theta^{(u)}=1}^{t_u}d_{\theta^{(u)}})-d_0\}$ and forms the compartment secret polynomial
        \[s'_u(x)=s_u(x)+\alpha'_u(x)x^{d_0}.\]
   \item $D$ chooses $\alpha'_{q+1}(x)$ uniformly from $\mathcal{G}'_{q+1}=\{g(x)\in \mathbb{F}_p[x]: \deg(g(x))<(\sum_{\theta=1}^{t}d_{\theta})-d_0\}$ and forms the global secret polynomial
        \[s'_{q+1}(x)=s_{q+1}(x)+\alpha'_{q+1}(x)x^{d_0}.\]
   \item Each participant $P_{\theta}=P_{u,v}$ receives the share
        \[s_{u,v}(x)=s'_u(x)\bmod{m_{u,v}(x)},\]
        and $D$ publishes
        \[\omega_{u,v}(x)=(s'_{q+1}(x)-s_{u,v}(x))\bmod{m_{u,v}(x)}.\]
\end{enumerate}

\textbf{2) Secret Reconstruction Phase.} Let $\mathcal{A}\in \Gamma_L$ and $\mathcal{A}^{(i)}=\mathcal{A}\cap \gamma_i$. For each $i\in [q]$ the participants of $\mathcal{A}^{(i)}$ pool their private shares and reconstruct
        \[s'_i(x) = \sum_{{\theta^{(i)}}\in \mathcal{A}^{(i)}} \lambda_{\theta^{(i)}}(x) M_{\theta^{(i)}}(x) c_{\theta^{(i)}}(x) \bmod{M_{\mathcal{A}^{(i)}}(x)},\]
where $c_{\theta^{(i)}}(x)=s_{u,v}(x), m_{\theta^{(i)}}(x)=m_{u,v}(x)$ for $\theta^{(i)}=N_{u-1}+v$, and $M_{\mathcal{A}^{(i)}}(x) = \prod_{{\theta^{(i)}} \in \mathcal{A}^{(i)}} m_{\theta^{(i)}}(x)$, $M_{\theta^{(i)}}(x) = M_{\mathcal{A}^{(i)}}(x)/m_{\theta^{(i)}}(x), \lambda_{\theta^{(i)}}(x) \equiv M_{\theta^{(i)}}^{-1}(x)\ (\mathrm{mod}\ m_{\theta^{(i)}}(x))$.

Next, the participants of $\mathcal{A}$ pool their private shares and public shares to reconstruct
\[s'_{q+1}(x) = \sum_{{\theta} \in \mathcal{A}} \lambda_{\theta}(x) M_{\theta}(x) c_{\theta}(x) \bmod{M_\mathcal{A}(x)},\]
where $c_{\theta}(x)=\omega_{u,v}+s_{u,v}(x), m_{\theta}(x)=m_{u,v}(x)$ for $\theta=N_{u-1}+v$, and $M_{\mathcal{A}}(x) = \prod_{{\theta} \in \mathcal{A}} m_{\theta}(x)$, $M_{\theta}(x) = M_{\mathcal{A}}(x)/m_{\theta}(x), \lambda_{\theta}(x) \equiv M_{\theta}^{-1}(x)\ (\mathrm{mod}\ m_{\theta}(x))$.

Finally, $s_i(x) = s'_i(x) \bmod{m_0(x)}$ for $i\in [q+1]$, and $s(x) = (s_1(x) + \cdots + s_q(x) + s_{q+1}(x))\bmod{m_0(x)}$.

\section{Cryptanalysis}\label{Sec: attack}

Subsection~\ref{Subsec:attack} points out that the scheme of \cite{YangCSS2026} has a security flaw: some unauthorized subsets are able to reconstruct the secret. Subsection~\ref{Subsec: discussion} discusses the flaw in the security analysis of \cite{YangCSS2026}.

\subsection{Unauthorized subsets that reconstruct the secret}\label{Subsec:attack}

For $\mathcal{X}\subseteq [n]$ we write $M_{\mathcal{X}}(x)=\prod\limits_{\theta\in \mathcal{X}}m_{\theta}(x)$ and $D_{\mathcal{X}}=\sum\limits_{\theta\in \mathcal{X}}d_{\theta}=\deg(M_{\mathcal{X}}(x))$. Denote by
\[T_u=\sum\limits_{\theta=1}^{t_u}d_{\theta}~(u\in [q]), ~\mathrm{and}~ T=\sum\limits_{\theta=1}^{t}d_{\theta}.\]
For $\mathcal{B}\subseteq [n]$, we let $\mathcal{B}\setminus\gamma_u=\{\theta:\theta\in \mathcal{B}~\mathrm{and}~\theta\notin \gamma_u\}$. Next, we present basic properties of the scheme of \cite{YangCSS2026} in Lemmas \ref{lemma:1} and \ref{le: leakage}. Based on these properties, we give attacks on their scheme in Theorem \ref{Th: attack}. We then show in Corollary \ref{Cor: witnesses} when these attacks apply to unauthorized subsets.

\begin{lemma}\label{lemma:1}
Let $\mathcal{X}\subseteq [n]$, $u\in [q]$ and let $k$ be a positive integer. Then the following properties hold in the scheme of \cite{YangCSS2026}.
\begin{enumerate}[label=(\roman*)]
   \item $\deg(s'_u(x))<T_u$ and $\deg(s'_{q+1}(x))<T$.
   \item $D_{\mathcal{X}}\geq \sum_{\theta=1}^{k}d_{\theta}$ if $|\mathcal{X}|\geq k$.
   \item $T\geq T_u$.
   \item $D_{\gamma_u}\geq T_u$.
   \item $s(x)=\big(\sum_{u=1}^{q}s'_u(x)+s'_{q+1}(x)\big)\bmod{m_0(x)}$.
\end{enumerate}
\end{lemma}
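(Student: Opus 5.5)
The plan is to verify the five items one at a time, directly from the construction in Section~\ref{Sec: review}. Item (ii) comes first because (iii) and (iv) rely on it.

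For (i), note that $s'_u(x)=s_u(x)+\alpha'_u(x)x^{d_0}$ with $\deg(s_u(x))<d_0$ and $\deg(\alpha'_u(x))<T_u-d_0$. The first summand therefore has degree below $d_0\le T_u$, and the second has degree below $(T_u-d_0)+d_0=T_u$, so $\deg(s'_u(x))<T_u$. For $s'_{q+1}(x)$ the argument is the same, using $\deg(s_{q+1}(x))<d_0$ (it is reduced modulo $x^{d_0}$) and $\deg(\alpha'_{q+1}(x))<T-d_0$. Two boundary cases need a remark. First, $T_u\ge d_0$ holds because $T_u\ge d_1\ge d_0$ by Step~2(ii). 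Second, if $T_u=d_0$ the set $\mathcal{G}'_u$ contains only the zero polynomial, and the bound still holds.

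For (ii), the degrees are nondecreasing in $\theta$ (Step~2(ii)). Hence any $k$ distinct indices in $[n]$ have degree sum at least that of the $k$ smallest indices $1,\dots,k$. To make this precise, list the elements of $\mathcal{X}$ as $\theta_1<\theta_2<\cdots$. Then $\theta_j\ge j$, so $d_{\theta_j}\ge d_j$. Summing over $j\le k$, and dropping the remaining nonnegative terms, gives $D_{\mathcal{X}}\ge\sum_{\theta=1}^{k}d_\theta$.

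For (iii), $t\ge\sum_i t_i\ge t_u$ and all $d_\theta\ge0$, so $T=\sum_{\theta=1}^{t}d_\theta\ge\sum_{\theta=1}^{t_u}d_\theta=T_u$. For (iv), apply (ii) with $\mathcal{X}=\gamma_u$ and $k=t_u\le n_u=|\gamma_u|$.

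For (v), reduce modulo $m_0(x)=x^{d_0}$. Each term $\alpha'_u(x)x^{d_0}$ vanishes, so $s'_u(x)\equiv s_u(x)$ for all $u\in[q+1]$. By Step~5, $\sum_{u=1}^{q+1}s_u(x)\equiv s(x)\pmod{x^{d_0}}$, and $\deg(s(x))<d_0$ gives the equality after reduction.

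Nothing here is really an obstacle. The only point needing care is the degree bookkeeping in (i), namely that the allowed degree ranges are nonnegative and the edge case where $\mathcal{G}'_u$ is trivial.
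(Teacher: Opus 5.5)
Your proposal is correct and follows the paper's proof item by item. It uses the same degree count for (i), the monotonicity of $d_\theta$ for (ii)–(iv), and direct reduction modulo $x^{d_0}$ for (v). Your extra care in (i) is a harmless refinement rather than a different route: bounding the two summands separately also covers $\alpha'_u(x)=0$, where the paper's inequality $\deg(s'_u(x))\le\deg(\alpha'_u(x))+d_0$ does not literally hold. The same goes for the explicit matching $\theta_j\ge j$ in (ii).
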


\begin{proof} (i) Recall that $s'_u(x)=s_u(x)+\alpha'_u(x)x^{d_0} (u\in [q])$ and $s'_{q+1}(x)=s_{q+1}(x)+\alpha'_{q+1}(x)x^{d_0}$,
   where $s_u(x)\in \mathcal{S}=\{g(x)\in \mathbb{F}_p[x]: \deg(g(x))<d_0\}$, $\alpha'_u(x)\in \mathcal{G}'_u=\{g(x)\in \mathbb{F}_p[x]: \deg(g(x))<(\sum_{\theta^{(u)}=1}^{t_u}d_{\theta^{(u)}})-d_0=T_u-d_0\}$ for $u\in [q]$, $s_{q+1}(x)=(s(x)-s_1(x)-\cdots-s_q(x))\bmod{m_0(x)}$, and $\alpha'_{q+1}(x)\in\mathcal{G}'_{q+1}=\{g(x)\in \mathbb{F}_p[x]: \deg(g(x))<(\sum_{\theta=1}^{t}d_{\theta})-d_0=T-d_0\}$. Consequently, $\deg(s'_u(x))\leq\deg(\alpha^{\prime}_u(x))+d_0<\sum_{\theta^{(u)}=1}^{t_u}d_{\theta^{(u)}}=T_u$ for $u\in [q]$, and $\deg(s'_{q+1}(x))\leq\deg(\alpha^{\prime}_{q+1}(x))+d_0<\sum_{\theta=1}^{t}d_{\theta}=T$.

(ii) Since $d_1\leq d_2\leq \cdots \leq d_n$, every $\mathcal{X}\subseteq [n]$ satisfies $D_{\mathcal{X}}\geq \sum_{\theta=1}^{|\mathcal{X}|}d_{\theta}$. Therefore, $D_{\mathcal{X}}\geq \sum_{\theta=1}^{k}d_{\theta}$ if $|\mathcal{X}|\geq k$.

(iii) Since $t\geq \sum_{i=1}^{q}t_i\geq t_u$ for $u\in [q]$, then
                \[T=\sum\limits_{\theta=1}^{t}d_{\theta}\geq \sum\limits_{\theta=1}^{t_u}d_{\theta}=T_u.\]
(iv)Since $|\gamma_u|=n_u\geq t_u$, then $D_{\gamma_u}\geq T_u$ by the above item (ii).

(v) Recall that $m_0(x)=x^{d_0}$, $s_u(x)=s'_u(x)\bmod{m_0(x)}$ and $s_{q+1}(x)=\big(s(x)-\sum_{u=1}^{q}s_u(x)\big)\bmod{m_0(x)}$. Therefore,
       \begin{displaymath}
         \begin{aligned}
         s(x)=&\left(\sum_{u=1}^{q}s_u(x)+s_{q+1}(x)\right)\bmod{m_0(x)}\\
           =&\left(\sum_{u=1}^{q}s^{\prime}_u(x)+s^{\prime}_{q+1}(x)\right)\bmod{m_0(x)}.
       \end{aligned}
       \end{displaymath}
\end{proof}

\begin{lemma}\label{le: leakage}
Denote by $g_u(x)=s'_{q+1}(x)-s'_u(x)$ for $u\in [q]$, then the following properties hold in the scheme of \cite{YangCSS2026}.
\begin{enumerate}[label=(\roman*)]
   \item $\deg(g_u(x))<T$ and $g_u(x)\equiv \omega_{u,v}(x)\pmod{m_{u,v}(x)}$ for $v\in [n_u]$.
   \item Every participant $P_{\theta}=P_{u,v}$ can compute
        \begin{displaymath}
         \begin{aligned}
         c_{\theta}(x)&=(s_{u,v}(x)+\omega_{u,v}(x))\bmod{m_{\theta}(x)}\\
                      &=s'_{q+1}(x)\bmod{m_{\theta}(x)}.
          \end{aligned}
       \end{displaymath}
   \item If $n_u\geq t$ for some $u\in [q]$, then $g_u(x)$ is computable from the public polynomials alone.
   \item If $s'_{q+1}(x)$ is known, then every compartment secret polynomial $s'_u(x)$, and hence the secret $s(x)$, is reconstructed from the public polynomials and $s'_{q+1}(x)$.
\end{enumerate}
\end{lemma}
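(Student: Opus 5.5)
We prove the four items in order; each reduces to a congruence manipulation plus the degree bounds of Lemma \ref{lemma:1}.

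For (i), the degree bound follows from Lemma \ref{lemma:1}(i) and (iii): $\deg(s'_{q+1}(x))<T$ and $\deg(s'_u(x))<T_u\le T$, so their difference has degree $<T$. For the congruence, the share satisfies $s_{u,v}(x)\equiv s'_u(x)\pmod{m_{u,v}(x)}$, and by definition $\omega_{u,v}(x)\equiv s'_{q+1}(x)-s_{u,v}(x)\pmod{m_{u,v}(x)}$. Substituting gives $\omega_{u,v}(x)\equiv s'_{q+1}(x)-s'_u(x)=g_u(x)\pmod{m_{u,v}(x)}$.

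For (ii), add $s_{u,v}(x)$ to both sides of the defining congruence of $\omega_{u,v}(x)$ and reduce modulo $m_\theta(x)=m_{u,v}(x)$. Participant $P_\theta$ holds $s_{u,v}(x)$, and $\omega_{u,v}(x)$ is public, so the computation is available to him.

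For (iii), all $\omega_{u,v}(x)$ for $v\in[n_u]$ are public, and by (i) they are the residues of $g_u(x)$ modulo the pairwise coprime moduli $m_{u,v}(x)$. Lemma \ref{le: CRT} then recovers $g_u(x)\bmod M_{\gamma_u}(x)$, and this equals $g_u(x)$ itself provided $\deg(g_u(x))<D_{\gamma_u}$. Since $n_u\ge t$, Lemma \ref{lemma:1}(ii) gives $D_{\gamma_u}\ge \sum_{\theta=1}^{t}d_\theta=T$, and (i) gives $\deg(g_u(x))<T$. So the degree condition holds. This matching of degrees is the only point needing care, and it is exactly why the hypothesis $n_u\ge t$ is required.

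For (iv), fix $u\in[q]$ and suppose $s'_{q+1}(x)$ is known. For each $v$, the residue $s'_u(x)\bmod m_{u,v}(x)$ equals $(s'_{q+1}(x)-\omega_{u,v}(x))\bmod m_{u,v}(x)$, by (i) together with $s_{u,v}\equiv s'_u$. Apply Lemma \ref{le: CRT} over all of $\gamma_u$. This determines $s'_u(x)$ uniquely, because $\deg(s'_u(x))<T_u\le D_{\gamma_u}$ by Lemma \ref{lemma:1}(i) and (iv). Doing this for every $u$ and then applying Lemma \ref{lemma:1}(v) yields $s(x)$.

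No step is a genuine obstacle. The key is the degree comparison in (iii) and (iv), which uses the whole compartment $\gamma_u$ rather than only $t_u$ of its members.
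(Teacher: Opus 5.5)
Your proof is correct and follows the paper's argument essentially step for step. You use the degree bound and congruence substitution in (i), the CRT over the whole compartment $\gamma_u$ with $D_{\gamma_u}\ge T$ (resp.\ $D_{\gamma_u}\ge T_u$) in (iii) and (iv), and Lemma~\ref{lemma:1}(v) to finish. The only cosmetic difference is in (ii): you read off $c_\theta(x)=s'_{q+1}(x)\bmod m_\theta(x)$ directly from the definition of $\omega_{u,v}(x)$, whereas the paper routes it through $s'_u(x)+g_u(x)$ using (i).
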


\begin{proof}
(i) According to Lemma \ref{lemma:1} item (i) and item (iii), we have $\deg(s'_u(x))<T_u$, $\deg(s'_{q+1}(x))<T$, and $T\geq T_u$. Consequently, $\deg(g_u(x))<T$. Recall that $s_{u,v}(x)=s'_u(x)\bmod{m_{u,v}(x)}$ for $v\in [n_u]$, hence
        \begin{displaymath}
         \begin{aligned}
              \omega_{u,v}(x)&=(s'_{q+1}(x)-s_{u,v}(x))\bmod{m_{u,v}(x)}\\
              &=(s'_{q+1}(x)-s'_u(x))\bmod{m_{u,v}(x)}\\
              &=g_u(x)\bmod{m_{u,v}(x)}.
         \end{aligned}
        \end{displaymath}
Namely, $g_u(x)\equiv \omega_{u,v}(x)\pmod{m_{u,v}(x)}$ for $v\in [n_u]$.

(ii) Since $s_{u,v}(x)=s'_u(x)\bmod{m_{u,v}(x)}$ for $v\in [n_u]$, then by the above item (i) we have
        \begin{displaymath}
         \begin{aligned}
         c_{\theta}(x)&=(s_{u,v}(x)+\omega_{u,v}(x))\bmod{m_{\theta}(x)}\\
                      &=(s^{\prime}_{u}(x)+g_u(x))\bmod{m_{\theta}(x)}\\
                      &=s'_{q+1}(x)\bmod{m_{\theta}(x)}.
          \end{aligned}
        \end{displaymath}

(iii) If $n_u\geq t$ for some $u\in [q]$, then $D_{\gamma_u}\geq \sum_{\theta=1}^{t}d_{\theta}=T$ by Lemma \ref{lemma:1} item (ii). According to the above item (i), we have $\deg(g_u(x))<T\leq D_{\gamma_u}$ and $g_u(x)\equiv \omega_{u,v}(x)\pmod{m_{u,v}(x)}$ for $v\in [n_u]$. Therefore, $g_u(x)$ is computable from the public polynomials alone by Lemma \ref{le: CRT}.

(iv) From the above item (i) we have $g_u(x)\equiv \omega_{u,v}(x)\pmod{m_{u,v}(x)}$ for $v\in [n_u]$. This shows that
               \[s'_u(x)\equiv (s'_{q+1}(x)-\omega_{u,v}(x)) \pmod{m_{u,v}(x)} ~\mathrm{for}~v\in [n_u].\]
By Lemma \ref{lemma:1} item (i) and item (iv), it holds that $\deg(s'_u(x))<T_u\leq D_{\gamma_u}$. Therefore, compartment secret polynomials $s'_u(x), u\in [q]$ are recovered by Lemma \ref{le: CRT} if $s'_{q+1}(x)$ is known. Consequently, the secret $s(x)$ is reconstructed by Lemma \ref{lemma:1} item (v).
\end{proof}

Lemma \ref{le: leakage} item (iv) shows that the $q+1$ polynomials $s'_1(x),\ldots,s'_{q+1}(x)$ are not independent, and the secret is a public function of the polynomial $s'_{q+1}(x)$.

\begin{theorem}\label{Th: attack}
In the scheme of \cite{YangCSS2026}, the participants in $\mathcal{B}$ can reconstruct the secret $s(x)$ if one of the following conditions holds, where $\mathcal{B}\subseteq [n]$ and $\mathcal{B}^{(u)}=\mathcal{B}\cap \gamma_u$ for $u\in [q]$.
\begin{enumerate}[label=(\roman*)]
   \item $|\mathcal{B}|\geq t$;
   \item $|\mathcal{B}^{(u)}|\geq t_u$ and $|\mathcal{B}\setminus \gamma_u|+n_u\geq t$ for some $u\in [q]$;
   \item $n_u\geq t$ and $|\mathcal{B}|\geq t_u$ for some $u\in [q]$.
\end{enumerate}
Note that every $\mathcal{B}\in \Gamma_L$ satisfies the conditions (i) and (ii), whereas each of the three conditions may also be satisfied by some unauthorized subset $\mathcal{B}\notin \Gamma_L$ (see Corollary \ref{Cor: witnesses}). Such an unauthorized subset can reconstruct the secret, although Yang, Zhu, Fu and Xia \cite{YangCSS2026} claim that no unauthorized subset obtains any information about the secret.
\end{theorem}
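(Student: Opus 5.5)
The common strategy for all three conditions is to reduce to Lemma \ref{le: leakage} item (iv). It therefore suffices to show that $\mathcal{B}$ can compute $s'_{q+1}(x)$. Since $\deg(s'_{q+1}(x))<T$ by Lemma \ref{lemma:1} item (i), it is enough for $\mathcal{B}$ to know $s'_{q+1}(x)$ modulo a product of pairwise coprime moduli of total degree at least $T$. The CRT (Lemma \ref{le: CRT}) then gives $s'_{q+1}(x)$ uniquely.

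\textbf{Condition (i).} Each $\theta\in\mathcal{B}$ computes $c_\theta(x)=s'_{q+1}(x)\bmod m_\theta(x)$ by Lemma \ref{le: leakage} item (ii). Since $|\mathcal{B}|\geq t$, Lemma \ref{lemma:1} item (ii) gives $D_{\mathcal{B}}\geq T$. Lemma \ref{le: CRT} then recovers $s'_{q+1}(x)$.

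\textbf{Condition (ii).} The members of $\mathcal{B}^{(u)}$ pool their private shares, which are $s'_u(x)\bmod m_{u,v}(x)$. Since $|\mathcal{B}^{(u)}|\geq t_u$, we have $D_{\mathcal{B}^{(u)}}\geq T_u>\deg(s'_u(x))$ by Lemma \ref{lemma:1} items (i) and (ii), so they recover $s'_u(x)$ by Lemma \ref{le: CRT}. With $s'_u(x)$ known, the public $\omega_{u,v}(x)$ give $s'_{q+1}(x)\equiv s'_u(x)+\omega_{u,v}(x)\pmod{m_{u,v}(x)}$ for every $v\in[n_u]$. This uses Lemma \ref{le: leakage} item (i), i.e. $\omega_{u,v}\equiv s'_{q+1}-s'_u$. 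So the residues of $s'_{q+1}(x)$ are known modulo all of $\gamma_u$, not only for the members of $\mathcal{B}^{(u)}$. In addition, each $\theta\in\mathcal{B}\setminus\gamma_u$ supplies $c_\theta(x)$ by Lemma \ref{le: leakage} item (ii). The index set $\gamma_u\cup(\mathcal{B}\setminus\gamma_u)$ is a disjoint union of size $n_u+|\mathcal{B}\setminus\gamma_u|\geq t$. Hence its modulus degree is at least $T$ by Lemma \ref{lemma:1} item (ii), and the CRT yields $s'_{q+1}(x)$.

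\textbf{Condition (iii).} Since $n_u\geq t$, Lemma \ref{le: leakage} item (iii) lets anyone compute $g_u(x)=s'_{q+1}(x)-s'_u(x)$ from public data. The plan is to combine this with knowledge of one of $s'_u$ or $s'_{q+1}$. Each $\theta\in\mathcal{B}$ knows $c_\theta(x)=s'_{q+1}(x)\bmod m_\theta(x)$, and knowing $g_u$ it also knows $s'_u(x)\bmod m_\theta(x)=(c_\theta(x)-g_u(x))\bmod m_\theta(x)$. So $\mathcal{B}$ knows $s'_u(x)$ modulo $M_{\mathcal{B}}(x)$, where $D_{\mathcal{B}}\geq T_u>\deg(s'_u(x))$ because $|\mathcal{B}|\geq t_u$. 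It recovers $s'_u(x)$ by the CRT and then sets $s'_{q+1}(x)=s'_u(x)+g_u(x)$. Lemma \ref{le: leakage} item (iv) finishes the argument.

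The main subtlety is condition (iii): one must notice that the members of $\mathcal{B}$ need not lie in $\gamma_u$, because $g_u$ converts any participant's $c_\theta$ into a residue of $s'_u$. Likewise, in (ii) the key point is that the public $\omega_{u,v}$ extend the residues of $s'_{q+1}$ to the whole of $\gamma_u$. For the closing remark, any $\mathcal{B}\in\Gamma_L$ has $|\mathcal{B}|\geq t$, which is (i). It also has $|\mathcal{B}^{(u)}|\geq t_u$ for every $u$, and $|\mathcal{B}\setminus\gamma_u|+n_u\geq|\mathcal{B}\setminus\gamma_u|+|\mathcal{B}^{(u)}|=|\mathcal{B}|\geq t$, which is (ii). The existence of unauthorized witnesses is deferred to Corollary \ref{Cor: witnesses}.
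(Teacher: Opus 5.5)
Your proposal is correct and follows essentially the same route as the paper in all three cases, including the reduction to Lemma~\ref{le: leakage} item (iv) and the closing remark about $\mathcal{B}\in\Gamma_L$. The only cosmetic difference is in case (ii): you apply the CRT directly to $s'_{q+1}(x)$ using the shifted residues $s'_u(x)+\omega_{u,v}(x)$ and $c_\theta(x)$, whereas the paper first reconstructs $g_u(x)$ from $\omega_{u,v}(x)$ and $c_\theta(x)-s'_u(x)$ and then adds $s'_u(x)$, and the two are equivalent.
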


\begin{proof}
Recall that $g_u(x)=s'_{q+1}(x)-s'_u(x)$ and let $c_{\theta}(x)$ be as in Lemma \ref{le: leakage} item (ii). According to Lemma \ref{le: leakage} item (iv),  it suffices to prove that the global secret polynomial $s^{\prime}_{q+1}(x)$ can be recovered from any one of the conditions.

\emph{Case} (i). By Lemma \ref{le: leakage} item (ii), the participants in $\mathcal{B}$ establish the system of congruences
\[s'_{q+1}(x)\equiv c_{\theta}(x) \pmod{m_{\theta}(x)}, \theta\in \mathcal{B}.\]
Since $|\mathcal{B}|\geq t$, from Lemma \ref{lemma:1} item (i) and item (ii), we have
                 \[\deg(s'_{q+1}(x))<T=\sum\limits_{\theta=1}^{t}d_{\theta}\leq D_{\mathcal{B}}.\]
Therefore, $s'_{q+1}(x)$ is recovered by Lemma \ref{le: CRT}.

\emph{Case} (ii). For a given $u\in [q]$, the participants in $\mathcal{B}^{(u)}$ establish the system of congruences
          \[s_{u,v}(x)=s'_u(x)\bmod{m_{u,v}(x)}~\mathrm{for}~P_{u,v}\in \mathcal{B}^{(u)}.\]
Since $|\mathcal{B}^{(u)}|\geq t_u$, from Lemma \ref{lemma:1} item (i) and item (ii), we have
               \[\deg(s'_{u}(x))<T_u=\sum\limits_{\theta=1}^{t_u}d_{\theta}\leq D_{\mathcal{B}^{(u)}}.\]
Therefore, $s'_{u}(x)$ is recovered by Lemma \ref{le: CRT}. Based on this, the set $\mathcal{B}$ can determine $g_u(x)\equiv \omega_{u,v}(x) \!\!\pmod{m_{u,v}(x)}~(v\in [n_u])$ and $g_u(x)\equiv (c_{\theta}(x)-s'_u(x)) \pmod{m_{\theta}(x)}~(\theta\in \mathcal{B}\setminus \gamma_u)$,
where the first family of congruences is publicly known by Lemma \ref{le: leakage} item (i), and the second family uses $c_{\theta}(x), \theta\in \mathcal{B}\setminus \gamma_u$ in Lemma \ref{le: leakage} item (ii) and the recovered $s'_u(x)$. Since $\gamma_u$ and $\mathcal{B}\setminus \gamma_u$ are disjoint and $|\mathcal{B}\setminus \gamma_u|+n_u\geq t$, these $n_u+|\mathcal{B}\setminus \gamma_u|$ pairwise coprime moduli have total degree, by Lemma \ref{lemma:1} item (ii),
                      \[D_{\gamma_u}+D_{\mathcal{B}\setminus \gamma_u}\geq \sum\limits_{\theta=1}^{t}d_{\theta}=T>\deg(g_u(x)).\]
Consequently, the polynomial $g_u(x)$ is recovered by Lemma \ref{le: CRT}, and hence $s'_{q+1}(x)=s'_u(x)+g_u(x)$.

\emph{Case} (iii). Since $n_u\geq t$ for some $u\in [q]$, the polynomial $g_u(x)$ is publicly known by Lemma \ref{le: leakage} item (iii). Recall that $g_u(x)=s'_{q+1}(x)-s'_u(x)$. Hence every $\theta\in \mathcal{B}$, \emph{of whatever compartment}, produces a residue of $s'_u(x)$ by Lemma \ref{le: leakage} item (ii), namely
      \begin{displaymath}
         \begin{aligned}
              s'_u(x)&\equiv (s^{\prime}_{q+1}(x)-g_u(x)) \pmod{m_{\theta}(x)}\\
                     &\equiv (c_{\theta}(x)-g_u(x)) \pmod{m_{\theta}(x)}, \theta\in \mathcal{B}.\\
         \end{aligned}
        \end{displaymath}
Since $|\mathcal{B}|\geq t_u$, we have $\deg(s'_u(x))<T_u\leq D_{\mathcal{B}}$ by Lemma \ref{lemma:1} item (i) and (ii). Therefore, the compartment secret polynomial $s'_u(x)$ is recovered, and hence $s'_{q+1}(x)=s'_u(x)+g_u(x)$.

Finally, let $\mathcal{B}\in \Gamma_L$. Then $|\mathcal{B}|\geq t$, so the condition (i) holds. Moreover, for every $u\in [q]$, $|\mathcal{B}^{(u)}|\geq t_u$ together with
         \[|\mathcal{B}\setminus \gamma_u|+n_u\geq |\mathcal{B}\setminus \gamma_u|+|\mathcal{B}^{(u)}|=|\mathcal{B}|\geq t\]
shows that the condition (ii) holds as well.
\end{proof}

\begin{corollary}\label{Cor: witnesses}
Each of the three conditions of Theorem \ref{Th: attack} is satisfied by some unauthorized subset, in the following sense.
\begin{enumerate}[label=(\roman*)]
  \item If $n-n_u+t_u-1\geq t$ for some $u\in [q]$, then the subset consisting of all the participants outside $\gamma_u$ together with $t_u-1$ participants of $\gamma_u$ is unauthorized and satisfies Theorem \ref{Th: attack} item (i).
  \item If $t_u<n_u$ for some $u\in [q]$, then the subset consisting of $t_u$ participants of $\gamma_u$ together with $\max\{0,t-n_u\}$ participants outside $\gamma_u$ is unauthorized and satisfies Theorem \ref{Th: attack} item (ii).
  \item If $n_u\geq t$ for some $u\in [q]$, then any subset consisting of $t_u$ participants of $\gamma_u$ is unauthorized and satisfies Theorem \ref{Th: attack} item (iii).
\end{enumerate}
Consequently, the scheme of \cite{YangCSS2026} is insecure unless $t_u=n_u$ for all $u\in [q]$, in which case $t=n$ and $\Gamma_L=\{\mathcal{P}\}$.
\end{corollary}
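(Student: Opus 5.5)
For each item we exhibit the stated subset $\mathcal{B}$, show that $\mathcal{B}\notin\Gamma_L$ by finding one violated lower bound, and then check the hypotheses of the corresponding item of Theorem \ref{Th: attack}.

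\emph{Item (i).} Let $\mathcal{B}=(\mathcal{P}\setminus\gamma_u)\cup\mathcal{C}$, where $\mathcal{C}\subseteq\gamma_u$ and $|\mathcal{C}|=t_u-1$ (possible since $t_u\le n_u$). Then $|\mathcal{B}^{(u)}|=t_u-1<t_u$, so $\mathcal{B}\notin\Gamma_L$. On the other hand, $|\mathcal{B}|=n-n_u+t_u-1\ge t$ by hypothesis, which is Theorem \ref{Th: attack} item (i).

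\emph{Item (ii).} Let $\mathcal{B}$ consist of $t_u$ participants of $\gamma_u$ together with $r=\max\{0,t-n_u\}$ participants outside $\gamma_u$. Such participants exist: if $r>0$ then $r=t-n_u\le n-n_u$ because $t\le n$. For the attack hypotheses, $|\mathcal{B}^{(u)}|=t_u$ and $|\mathcal{B}\setminus\gamma_u|+n_u=r+n_u\ge t$, which is Theorem \ref{Th: attack} item (ii).

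For unauthorizedness, we show $|\mathcal{B}|=t_u+r<t$.
\begin{itemize}
\item If $r=t-n_u$, then $|\mathcal{B}|=t-(n_u-t_u)<t$, since $t_u<n_u$.
\item If $r=0$, then $|\mathcal{B}|=t_u$. Since $q\ge 2$ and every $t_i\ge 1$, we get $t\ge\sum_i t_i>t_u$.
\end{itemize}
Either way $|\mathcal{B}|<t$, so $\mathcal{B}\notin\Gamma_L$. The same fact $q\ge2$ also shows directly that $\mathcal{B}$ misses the other compartments' thresholds when $r=0$.

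\emph{Item (iii).} Let $\mathcal{B}\subseteq\gamma_u$ with $|\mathcal{B}|=t_u$. Since $\mathcal{B}\cap\gamma_w=\emptyset$ for any $w\ne u$, which exists because $q\ge2$, and $t_w\ge1$, we have $\mathcal{B}\notin\Gamma_L$. Together with $n_u\ge t$ and $|\mathcal{B}|= t_u$, this is Theorem \ref{Th: attack} item (iii).

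\emph{The final claim.} Suppose $t_u<n_u$ for some $u$. Then item (ii) of the corollary gives an unauthorized subset that recovers the secret by Theorem \ref{Th: attack}, so the scheme is insecure. Otherwise $t_u=n_u$ for all $u$. Then $n=\sum_u n_u=\sum_u t_u\le t\le n$, so $t=n$, and the only subset with $|\mathcal{A}|\ge n$ is $\mathcal{P}$, giving $\Gamma_L=\{\mathcal{P}\}$.

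The only delicate point is unauthorizedness in item (ii) when $r=0$, where we need the strict inequality $t>t_u$. This rests on the standing assumption $q\ge2$ and the positivity of the $t_i$, so that is where care is needed. Everything else is direct counting.
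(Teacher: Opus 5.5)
Your proof is correct and follows the paper's argument essentially step for step. There are two cosmetic differences. In item (ii) with $r=0$, you certify unauthorizedness via $|\mathcal{B}|=t_u<t$ (using $q\ge 2$ and $t_i\ge 1$), whereas the paper uses the empty intersection with some other compartment $\gamma_w$. You also obtain $\Gamma_L=\{\mathcal{P}\}$ from $t=n$, whereas the paper obtains it from $t_u=n_u$ forcing $\mathcal{A}\supseteq\gamma_u$ for every $u$.
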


\begin{proof}
Denote the subsets constructed in items (i)-(iii) be $\mathcal{B}_1, \mathcal{B}_2$ and $\mathcal{B}_3$, respectively.

(i) Since $|\mathcal{B}_1|=(n-n_u)+(t_u-1)\geq t$, Theorem \ref{Th: attack} item (i) holds, while $|\mathcal{B}_1\cap \gamma_u|=t_u-1<t_u$ shows that $\mathcal{B}_1\notin \Gamma_L$.

(ii) Since $t\leq n$, we have $0\leq\max\{0,t-n_u\}\leq n-n_u$, which ensures that the subset $\mathcal{B}_2$ is well defined. Since $|\mathcal{B}_2\cap \gamma_u|=|\mathcal{B}_2^{(u)}|=t_u$ and $|\mathcal{B}_2\setminus \gamma_u|+n_u=\max\{0,t-n_u\}+n_u\geq t$, Theorem \ref{Th: attack} item (ii) holds. Now we show that $\mathcal{B}_2\notin \Gamma_L$. If $n_u\geq t$, then $\mathcal{B}_2\subseteq \gamma_u$, and $|\mathcal{B}_2\cap \gamma_w|=0<t_w$ for any $w\neq u$ (such a $w$ exists since $q\geq 2$). If $n_u<t$, then $|\mathcal{B}_2|=t_u+t-n_u<t$ because $t_u<n_u$. In both cases $\mathcal{B}_2\notin \Gamma_L$.

(iii) Since $n_u\geq t$ for some $u\in [q]$ and $|\mathcal{B}_3|=t_u$, Theorem \ref{Th: attack} item (iii) holds. Moreover, $\mathcal{B}_3\subseteq \gamma_u$, which implies that $|\mathcal{B}_3\cap \gamma_w|=0<t_w$ for any $w\neq u$ (such a $w$ exists since $q\geq 2$), so $\mathcal{B}_3\notin \Gamma_L$.

Finally, assume that the scheme is secure. By the above item (ii), $t_u=n_u$ for all $u\in [q]$. Hence $\sum_{u=1}^{q}t_u=\sum_{u=1}^{q}n_u=n$, and $\sum_{u=1}^{q}t_u\leq t\leq n$ implies $t=n$. Since every authorized subset $\mathcal{A}$ satisfies $|\mathcal{A}\cap \gamma_u|\geq t_u=n_u$ for all $u\in [q]$, it follows that $\mathcal{A}\supseteq \gamma_u$ for all $u\in [q]$. Therefore $\Gamma_L=\{\mathcal{P}\}$.
\end{proof}

\begin{remark}\label{Re: degeneration}
Theorem \ref{Th: attack} item (i) shows that the compartment thresholds play no role at all: any $t$ participants reconstruct the secret, so the family of sets that can reconstruct the secret contains $\{\mathcal{B}\subseteq \mathcal{P}: |\mathcal{B}|\geq t\}$, and the scheme realises at most a $(t,n)$-threshold access structure. Theorem \ref{Th: attack} item (iii) is more damaging still: as soon as one compartment satisfies $n_u\geq t$, any $t_u$ participants reconstruct the secret, and the scheme realises at most the $(t_u,n)$-threshold access structure. Since $t_u$ may be as small as $1$, the scheme may be broken by a single participant.
\end{remark}

\subsection{The flaw in the security analysis}\label{Subsec: discussion}

In the perfectness argument of \cite{YangCSS2026} (Lemmas III.1, III.2 and Theorem III.3 therein), for an unauthorized subset $\mathcal{A}_1$ with $|\mathcal{A}_1\cap \gamma_i|<t_i$ for some $i\in [q]$, or with $|\mathcal{A}_1|<t$, the following sets are introduced:
      \begin{displaymath}
        \begin{aligned}
        \mathcal{G}_i=&\{g(x): \deg(g(x))<T_i,\ g(x)\equiv c_{\theta^{(i)}}(x) \pmod{m_{\theta^{(i)}}(x)}\\
                     &\mathrm{for}~P_{\theta^{(i)}}\in \mathcal{A}_1^{(i)}=\mathcal{A}_1\cap \gamma_i ~\mathrm{of}~|\mathcal{A}_1^{(i)}|<t_i\}
        \end{aligned}
      \end{displaymath}
and
     \begin{displaymath}
        \begin{aligned}
        \mathcal{G}=&\{g(x): \deg(g(x))<T,\ g(x)\equiv c_{\theta}(x) \pmod{m_{\theta}(x)}\\
                     &\mathrm{for}~P_{\theta}\in \mathcal{A}_1~\mathrm{of}~|\mathcal{A}_1|<t\}.
        \end{aligned}
     \end{displaymath}
The participants in $\mathcal{A}_1$ choose a polynomial $\widetilde{s_i^{\prime}}(x)$ randomly from $\mathcal{G}_i$ for some $i\in [q]$, or a polynomial $\widetilde{s^{\prime}}(x)$ randomly from $\mathcal{G}$, and then compute $\widetilde{s_i^{\prime}}(x)\bmod{m_0(x)}$ or $\widetilde{s^{\prime}}(x)\bmod{m_0(x)}$ to obtain the secret. Yang et al. conclude that, from the adversary's point of view, the secret is uniformly distributed on $\mathcal{S}$.

Note that the congruences defining $\mathcal{G}_i$ involve \emph{only the shares of $\mathcal{A}_1$}; the published polynomials $\omega_{\theta}(x)$, $\theta\in [n]$, do not appear.
The congruences defining $\mathcal{G}$ involve \emph{only the shares of $\mathcal{A}_1$ and the partial published polynomials $\omega_{\theta}(x), \theta\in \mathcal{A}_1$}. However, by any reasonable definition of perfectness for a scheme with public data, the published polynomials are part of the adversary's view. Therefore, the fact that the sets $\mathcal{G}_i$ and $\mathcal{G}$ do not involve all the published polynomials is the flaw in the security analysis of \cite{YangCSS2026}.

\begin{remark}\label{Re:2}
Similar flaws also appear in the hierarchical secret sharing schemes of the same first author \cite{Yangjing2024,YangjingHSS2026}, whose security analysis omits the published polynomials in exactly the same way.
\end{remark}

\section{Conclusion}\label{Sec: conclusion}

We have shown that the compartmented secret sharing scheme of Yang, Zhu, Fu and Xia \cite{YangCSS2026} is not perfect, and therefore not ideal. The scheme is insecure for every choice of the parameters except the trivial one in which the only authorized subset is $\mathcal{P}$ itself. Similar flaws also appear in the hierarchical secret sharing schemes of the same first author \cite{Yangjing2024,YangjingHSS2026}. It therefore remains an open problem to design an ideal CRT-based CSSS for LCASs and an ideal CRT-based hierarchical secret sharing scheme.

\begin{IEEEbiographynophoto}{Jian Ding}
received the MS degree in applied mathematics from Hefei University of Technology, China in 2010, and the PhD degree in Cyberspace Security from Fujian Normal University, China in 2022. He is currently a Professor at School of Mathematics and Big Data, Chaohu University, China. His research interests are in cryptography and network security, especially in secret sharing and its applications.
\end{IEEEbiographynophoto}

\begin{IEEEbiographynophoto}{Zhiyuan Guo}
 is currently a graduate student at School of Mathematics and Statistics, Hefei University, China. His research interests are in cryptography and network security, especially in secret sharing and its applications.
\end{IEEEbiographynophoto}

\begin{IEEEbiographynophoto}{Hongju Li}
received the MS degree in statistics from Guilin University of Technology, China in 2010. She is currently a Associate Professor at School of Mathematics and Big Data, Chaohu University, China. Her research interests are in algebraic coding and cryptography.
\end{IEEEbiographynophoto}

\begin{IEEEbiographynophoto}{Yuying Man}
received the MS degree in applied mathematics from South Central Minzu University, China in 2022, and the PhD degree in Mathematics from Hubei University, China in 2025. She is currently an Assistant Lecturer at School of Mathematics and Big Data, Chaohu University, China. Her research interests are in cryptography, especially in cryptographic functions.
\end{IEEEbiographynophoto}

\end{document}